\def\endthebibliography{%
  \def\@noitemerr{\@latex@warning{Empty `thebibliography' environment}}%
  \endlist
}
\newtheorem{theorem}{Theorem}
\newtheoremstyle{break}
  {\topsep}{\topsep}%
  {\itshape}{}%
  {\bfseries}{}%
  {\newline}{}%
\theoremstyle{break}
\theoremstyle{definition}
\newtheorem{definition}{Definition}
\DeclareMathOperator*{\argmax}{arg\,max}
\title{\LARGE \bf
A Two-Stage Mechanism for Demand Response Markets}
\author{\IEEEauthorblockN{Bharadwaj Satchidanandan, Mardavij Roozbehani, and Munther A. Dahleh}

	\IEEEauthorblockA{Laboratory for Information and Decision Systems\\ Massachusetts Institute of Technology}	
}
\begin{document}
\newgeometry{left=0.75in,right=0.75in,bottom=0.75in,top=1in}
\maketitle
\thispagestyle{empty}
\pagestyle{empty}

\begin{abstract}
Demand response involves system operators using incentives to modulate electricity consumption during peak hours or when faced with an incidental supply shortage. However, system operators typically have imperfect information about their customers' baselines, that is, their consumption had the incentive been absent. The standard approach to estimate the reduction in a customer's electricity consumption then is to estimate their counterfactual baseline. However, this approach is not robust to estimation errors or strategic exploitation by the customers and can potentially lead to overpayments to customers who do not reduce their consumption and underpayments to those who do. Moreover, optimal power consumption reductions of the customers depend on the costs that they incur for curtailing consumption, which in general are private knowledge of the customers, and which they could strategically misreport in an effort to improve their own utilities even if it deteriorates the overall system cost. The two-stage mechanism proposed in this paper circumvents the aforementioned issues. In the day-ahead market, the participating loads are required to submit only a probabilistic description of their next-day consumption and costs to the system operator for day-ahead planning. It is only in real-time, if and when called upon for demand response, that the loads are required to report their baselines and costs. They receive credits for reductions below their reported baselines. The mechanism for calculating the credits guarantees incentive compatibility of truthful reporting of the probability distribution in the day-ahead market and truthful reporting of the baseline and cost in real-time. The mechanism can be viewed as an extension of the celebrated Vickrey-Clarke-Groves mechanism augmented with a carefully crafted second-stage penalty for deviations from the day-ahead bids.

\end{abstract}

\begin{IEEEkeywords}
Demand Response, Mechanism design, Incentive compatibility, Stochastic baseline.
\end{IEEEkeywords}

\section{Introduction}\label{introduction}
{D}{emand} Response (DR) refers to a variety of mechanisms that aim to actively engage otherwise passive consumers, with the aim to modulate their electricity consumption during peak hours \cite{jordehi2019, deng2015survey, Vardakas2015}. It has been estimated that in some systems, over 20\% of energy costs are driven by just 2\% of peak hours. Thus, targeted demand reductions during particular periods can provide substantial value. In addition to improving system-wide economic efficiency, demand response can also provide ancillary services, defer network capacity upgrade costs, and match demand with intermittent supply from renewable resources \cite{siano2014demand, pinson2014benefits, lee2013assessment, Kwon2017}. It is expected that the value of demand response will continue to grow as the share of renewable energy in the generation portfolio increases. 

The literature addressing the challenges associated with the design of demand response mechanisms is broad, see for example \cite{deng2015survey, Wang2018, Xu2019, jordehi2019} for recent reviews. The most common form of demand response involves voluntary load reduction by participating consumers in exchange for monetary compensation. These programs are often referred to as incentive-based DR, price-based DR, or imputed DR \cite{hogan2010demand}, and exclude mechanisms such as direct load control. It is estimated that more than 2.7 million customers in the United States are enrolled in incentive-based demand response \cite{SEPA}.

Incentive-based DR programs share a common challenge: How to determine the reduction in a customer's consumption during a demand response period? This involves counterfactual estimation of consumer baselines \cite{varaiyaDR1} and is prone to estimation errors and strategic exploitation. Errors in baseline estimation can lead to overpayments to customers who do not actually reduce their consumption, and underpayments to those who do \cite{varaiyaDR2}. See for example \cite{schneider2021online} and references therein for a detailed discussion of the literature on baseline estimation. 

Moreover, optimal consumption reductions of the customers depend on the costs that they incur for curtailing consumption. These costs are in general private to the customers, and they could strategically misreport them in an effort to improve their own utilities even if it results in an increase in the overall system cost. 

In this paper, we present a mechanism that circumvents the issues that arise from baseline estimation errors and private costs by incentivizing demand response providers to truthfully self-report their baselines and costs. It is based on the mechanism developed in \cite{arxivAER} for two-stage repeated stochastic games. In the day-ahead market, participating consumers may not know exactly their baselines and costs the next day. Rather, they may only know a probabilistic description thereof. Hence, the mechanism allows for the demand response providers to submit only a \emph{probabilistic description} of their next-day consumption and costs to the system operator for day-ahead planning. It is only in real time, if and when called upon for demand response, that the customers are required to report their baselines and costs. Based on the real-time reports, optimal demand reductions are computed for all customers, and they receive credits for reductions below their reported baselines. Our mechanism for calculating the credits ensures ex post incentive compatibility of truthful bidding of the probability distributions in the day-ahead market and truthful reporting of the realized baselines and costs in real time. The mechanism can be viewed as an extension of the celebrated Vickrey-Clarke-Groves mechanism augmented with a carefully crafted second-stage penalty for deviations from the day-ahead bids.

{While the mechanism itself is adapted from \cite{arxivAER}, the main contribution of the paper lies in modeling the interaction between demand response providers and the system operator as a two-stage repeated stochastic game, thereby setting the stage for the application of the results developed in \cite{arxivAER} to design an efficient and incentive-compatible demand response market. }

The rest of the paper is organized as follows. Section \ref{section_relatedWork} provides an account of related work and highlights the contributions of this paper to the state of the art. Section \ref{problemFormulation} formulates the mechanism design problem. Section \ref{mechanism} presents the two-stage mechanism for demand response markets and establishes the incentive and optimality guarantees that it provides. Section \ref{numerical} presents certain numerical results that illustrate the cost benefits of the proposed mechanism as compared to a popular alternative. Section \ref{conclusion} concludes the paper.

\noindent\textbf{Notation:} Given a sequence $\{x(1),x(2),\hdots\},$ we denote by $x^l$ the sequence $\{x(1),\hdots,x(l)\}$ and by $x^\infty$ the entire sequence. Given a vector $\mathbf{x}$ of $n$ components, we denote by ${x}_{i}$ its $i$th component and by $\mathbf{x}_{-i}$ the vector of all other components. Finally, $1_{\{\cdot\}}$ is used to denote the indicator function. The paper is rather notation-heavy, and we have collected the list of symbols in the appendix for quick reference. 

\section{Related Work}\label{section_relatedWork}

A common approach to estimating the demand reduction of DR providers is to estimate their counterfactual baseline. A data-driven method to estimate baseline consumption of residential buildings is presented in \cite{dataDrivenBaselineEstimation}. A model based on features such as temperature and time of the week is presented in \cite{CallawayCounterfactual} to model the baseline consumption of a load, and a regression-based method to compute the baseline model parameters is presented. Also presented in \cite{CallawayCounterfactual} is a cross validation-based method to determine the error magnitudes in baseline parameter estimation. In the absence of carefully designed mechanisms, baseline manipulation can improve a demand response provider's revenue, and \cite{wang2018overconsume} derives an optimal baseline reporting strategy for a demand response provider. Reference \cite{DRexchange} proposes a demand response exchange for trading demand response capacity among DR providers and DR buyers. However, it does not account for baseline stochasticity, baseline inflation, strategic bidding, etc. 

Prior works that are the closest to our paper are \cite{KalathilShakkottai, VaraiyaDR3, varaiyaDR1, varaiyaDR2} which present mechanisms for demand response markets that incentivize demand response providers to truthfully report their baselines. While there are many differences between the formulations in \cite{KalathilShakkottai, VaraiyaDR3, varaiyaDR1, varaiyaDR2} and this paper, we highlight three major ones here. First, the aforementioned papers restrict attention to piecewise linear valuations for demand response providers and leave the case of general concave valuations as an open problem. On the other hand, the results of this paper not only apply to concave valuations, but also to arbitrary valuation functions. Secondly and more importantly, all of these papers assume the baselines and costs of DR providers to be deterministic so that they are perfectly known in the day-ahead market. 
However, an important characteristic of real-world loads is that their baselines and costs are in general random and not perfectly known in the day-ahead market. Motivated by this, we develop a mechanism that requires the DR providers to only report a \emph{probabilistic description} of their baselines and costs in the day-ahead market, and report their actual realizations only in real time. The mechanism incentivizes demand response providers to truthfully report the probability distribution \emph{and} the realization of these quantities in the day-ahead market and in real time respectively. Finally, the mechanisms developed in \cite{KalathilShakkottai, VaraiyaDR3} implement truth-telling only as a Nash equilibrium whereas the mechanism presented in this paper implements truth-telling in a stronger notion of equilibrium, namely Dominant Strategy Non-Bankrupting equilibrium. References \cite{varaiyaDR1, varaiyaDR2} implement truth-telling in dominant strategies but under the assumptions of deterministic piecewise linear costs and deterministic baselines. Since our formulation does not assume any specific form for the cost functions, nor does it assume the baselines or costs to be deterministic, it is considerably more general than the formulations in \cite{KalathilShakkottai, VaraiyaDR3, varaiyaDR1, varaiyaDR2}. References \cite{MuthirayanDR, DROptions, related1, related2, related3} are a few other papers addressing mechanism design for DR. 

\section{Problem Formulation}\label{problemFormulation}

Fig. \ref{fig_sketch} illustrates the overall setup, the individual components of which are elaborated in the ensuing subsections. The interaction between the Independent System Operator (ISO) and DR providers occurs over two stages, namely, the day-ahead market and the real-time market. Optimal operation of the grid involves the ISO solving the Economic Dispatch (ED) problem in the day-ahead market. In a power system containing DR providers, the ED problem must take into account the costs incurred by the DR providers in addition to the costs incurred by the generators and the reserves. In Sections \ref{Sec_entities} thru \ref{Sec_DRPolicy}, we introduce the quantities that are necessary to formulate the ED problem with DR providers. In Section \ref{SecDAMarket}, we formulate the ED problem with DR providers. The solution to the problem, as we will see, is a function of certain probability distributions that are privately known to the DR providers (denoted by $\theta_is$ in Fig. \ref{fig_sketch}). Hence, a DR provider could potentially misreport its probability distribution to cause the ISO to take a day-ahead market decision that improves its own utility even if it deteriorates the overall system cost. Optimal grid operation also requires the ISO to take certain recourse actions in real time as a function of certain random variables that realize privately to each DR provider in real time (denoted by $\delta_is$ in Fig. \ref{fig_sketch}). The DR providers could misreport these quantities too in real time in an effort to improve their own respective utilities. These issues motivate the mechanism design problem which we formulate in Section \ref{sec_mechanismDesign}.

\begin{figure}
\centering

\begin{tikzpicture}[scale=0.9]

\node[inner sep=0pt] (EV) at (1,9.9)
    {\includegraphics[width=.06\textwidth]{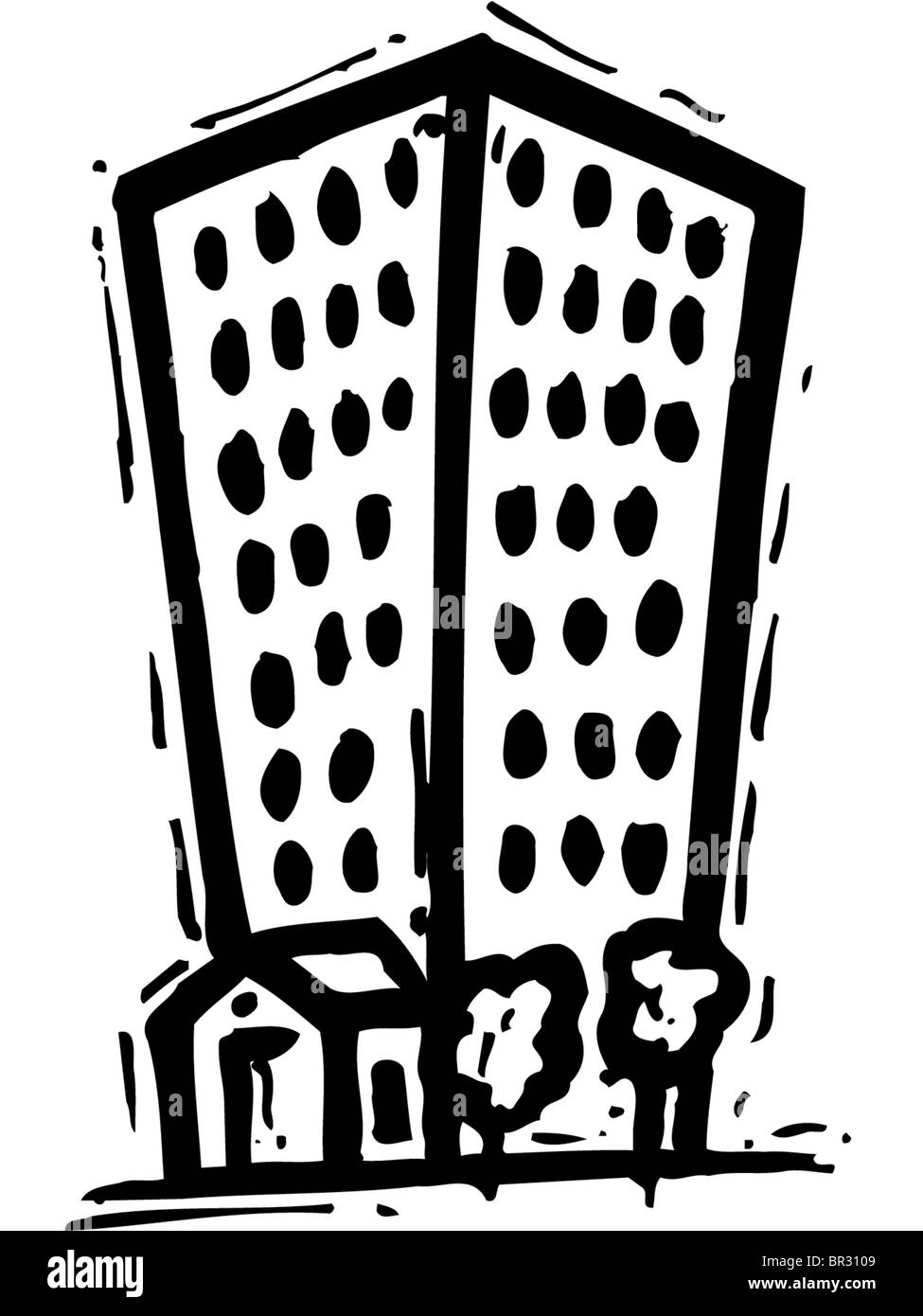}};

\node[inner sep=0pt] (EV) at (3,9.5)
    {\includegraphics[width=.06\textwidth]{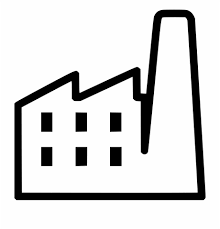}};

\node[align=center] at (4.5,9.5) {$\boldsymbol{\hdots}\;\boldsymbol{\hdots}$};    
    
\node[inner sep=0pt] (EV) at (6,9.6)
    {\includegraphics[width=.06\textwidth]{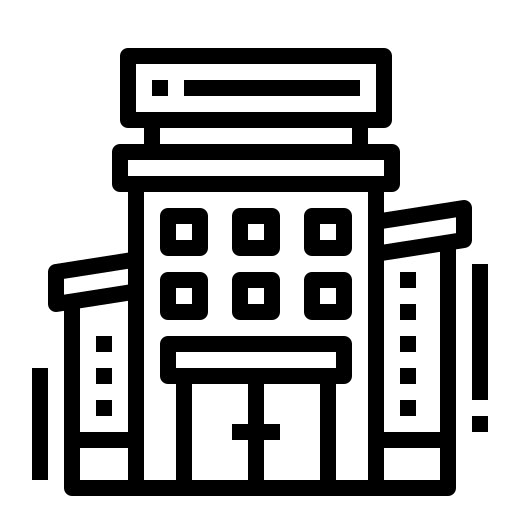}};

\node[align=center] at (7.5,9.5) {$\boldsymbol{\hdots}\;\boldsymbol{\hdots}$};

\node[inner sep=0pt] (EV) at (9,9.9)
    {\includegraphics[width=.06\textwidth]{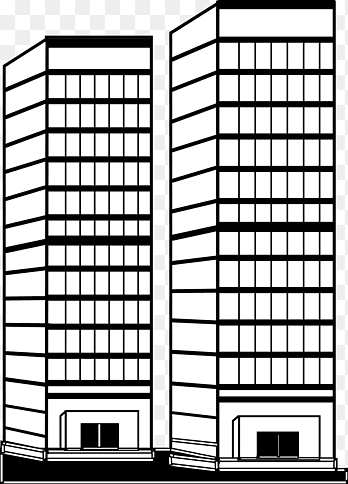}};

\node[align=center] at (1,8.8) {\footnotesize{DR load $1$}};
\node[align=center] at (3,8.8) {\footnotesize{DR load $2$}};
\node[align=center] at (6,8.8) {\footnotesize{DR load $i$}};
\node[align=center] at (9,8.8) {\footnotesize{DR load $n$}};

\node[align=center] at (1,8.3) { \textcolor{ForestGreen}{\footnotesize$\theta_1$}};
\node[align=center] at (3,8.3) { \textcolor{ForestGreen}{\footnotesize$\theta_2$}};
\node[align=center] at (6,8.3) { \textcolor{ForestGreen}{\footnotesize$\theta_i$}};
\node[align=center] at (9,8.3) { \textcolor{ForestGreen}{\footnotesize$\theta_n$}};

\draw[black, thick, ->] (1,8) -> (4.7,7);
\draw[black, thick, ->] (3,8) -> (4.9,7);
\draw[black, thick, ->] (6,8) -> (5.1,7);
\draw[black, thick, ->] (9,8) -> (5.3,7);

\node[inner sep=0pt] (SP) at (5,6.5)
    {\includegraphics[width=.075\textwidth]{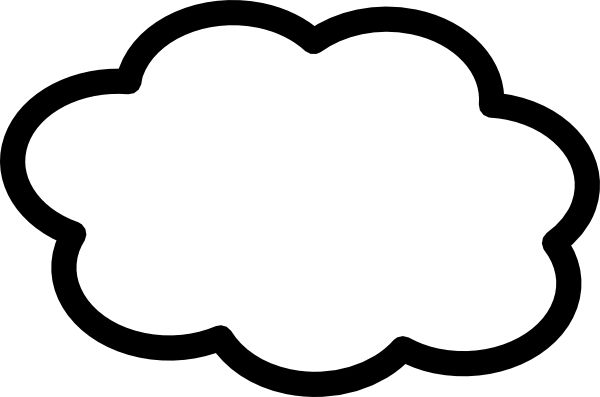}};
    
\node[align=center] at (5,6.5) {\footnotesize{{ISO}}};
\node[align=center] at (2.5,6.5) {\footnotesize{{Day-Ahead Market}}};

\draw[black, thick, ->] (5,6) -- (5,5.8);
\node[align=center] at (5,5.7) { \footnotesize$\big(g^*_{\textcolor{red}{\widehat{{\theta}}_\otimes}},\boldsymbol{\pi}^*_{\textcolor{red}{{\widehat{\theta}_\otimes}}}\big)$};

\node[align=center] at (1.8,7.5) { \textcolor{red}{\footnotesize$\widehat{\theta}_1$}};
\node[align=center] at (4.4,7.5) { \textcolor{red}{\footnotesize$\widehat{\theta}_2$}};
\node[align=center] at (5.9,7.5) { \textcolor{red}{\footnotesize$\widehat{\theta}_i$}};
\node[align=center] at (8.2,7.5) { \textcolor{red}{\footnotesize$\widehat{\theta}_n$}};

\draw[black, dotted, thick] (5,5.5) -- (5,5.3);
\draw[black, dotted, thick] (1,5.3) -- (9,5.3);
\draw[black, dotted, thick, ->] (1,5.3) -- (1,5.15);
\draw[black, dotted, thick, ->] (3,5.3) -- (3,4.6);
\draw[black, dotted, thick, ->] (6,5.3) -- (6,4.6);
\draw[black, dotted, thick, ->] (9,5.3) -- (9,5.15);

\node[inner sep=0pt] (EV) at (1,4.3)
    {\includegraphics[width=.06\textwidth]{apartment.jpg}};

\node[inner sep=0pt] (EV) at (3,3.9)
    {\includegraphics[width=.06\textwidth]{factory.png}};

\node[align=center] at (4.5,4.3) {$\boldsymbol{\hdots}\;\boldsymbol{\hdots}$};    
    
\node[inner sep=0pt] (EV) at (6,4)
    {\includegraphics[width=.06\textwidth]{mall.png}};

\node[align=center] at (7.5,4.3) {$\boldsymbol{\hdots}\;\boldsymbol{\hdots}$};

\node[inner sep=0pt] (EV) at (9,4.3)
    {\includegraphics[width=.06\textwidth]{apartment2.png}};

\node[align=center] at (1,3.2) { \textcolor{ForestGreen}{\footnotesize$\delta_1(l)\sim\theta_1$}};
\node[align=center] at (3,3.2) { \textcolor{ForestGreen}{\footnotesize$\delta_2(l)\sim\theta_2$}};
\node[align=center] at (6,3.2) { \textcolor{ForestGreen}{\footnotesize$\delta_i(l)\sim\theta_i$}};
\node[align=center] at (9,3.2) { \textcolor{ForestGreen}{\footnotesize$\delta_n(l)\sim\theta_n$}};

\draw[black, thick, ->] (1,3) -- (4.7,2);
\draw[black, thick, ->] (3,3) -> (4.9,2);
\draw[black, thick, ->] (6,3) -> (5.1,2);
\draw[black, thick, ->] (9,3) -> (5.3,2);

\node[align=center] at (1.8,2.5) { \textcolor{red}{\footnotesize${\widehat{\delta}_1}(l)$}};
\node[align=center] at (4.4,2.6) { \textcolor{red}{\footnotesize$\widehat{\delta}_2$(l)}};
\node[align=center] at (6.1,2.6) { \textcolor{red}{\footnotesize$\widehat{\delta}_i$(l)}};
\node[align=center] at (8.2,2.5) { \textcolor{red}{\footnotesize$\widehat{\delta}_n$(l)}};

\node[inner sep=0pt] (SP) at (5,1.5)
    {\includegraphics[width=.075\textwidth]{cloud.jpg}};
    
\node[align=center] at (5,1.5) {\footnotesize{{ISO}}};
\node[align=center] at (2.5,1.5) {\footnotesize{{Real-time Market}}};

\draw[black, thick, ->] (5,1) -- (5,0.5);
\node[align=center] at (5,0.2) { \footnotesize$\boldsymbol{\pi}^*_{\textcolor{red}{\widehat{\boldsymbol{\theta}}_\otimes}}(z(l),\textcolor{red}{\widehat{\boldsymbol{\delta}}(l)})$};

\end{tikzpicture}
\caption{The DR loads observe their type distribution $\{\theta_i\}_{i=1}^n$ in the day-ahead market. They report $\{\widehat{\theta}_i\}_{i=1}^n$ to the ISO. The ISO takes the day-ahead market decisions $\mathbf{g}^*_{\widehat{\theta}_\otimes}$ and $\boldsymbol{\pi}^*_{\widehat{\theta}_\otimes}$ based on the reported distributions. The DR loads observe the decisions. In real-time on any day $l$, the types $\{\delta_i(l)\}_{i=1}^n$ of the DR loads realize. They report $\{\widehat{\delta}_i(l)\}_{i=1}^n$ as their types. The ISO computes the real-time decisions on day $l$ based on the net demand and the reported types on day $l$.}\label{fig_sketch}
\end{figure}

\subsection{The entities in the power system}\label{Sec_entities}

Consider a power system consisting of (i) an ISO, (ii) a generator that should be scheduled in the day-ahead market, (iii) a fast-acting reserve from which the ISO can purchase energy in the spot market to balance real-time demand-supply mismatches, (iv) a renewable energy source such as solar or wind, (v) $n$ demand response-providing loads which we refer to henceforth as the DR loads, and (vi) inelastic loads.

We divide time into days and each day into several time intervals. {There are often constraints such as ramp rates that couple energy generation across different time intervals of a day, and constraints such as total energy requirements (eg., charge requirements of electric vehicles) that couple the demand across different time intervals of a day. However, for ease of exposition of the main ideas, we assume that there are no inter-temporal constraints that couple generation or demand across time so that it suffices to describe the problem and the mechanism for an arbitrarily fixed time interval $t$.} Hence, in all forthcoming notation, we only include the index $l\in\mathbb{N}$ denoting the day and suppress the index $t$ which denotes the time interval within the day.

\subsection{Demand of inelastic loads and DR providers}\label{Sec_demand}
Denote by $z(l)$ the net demand of inelastic loads on day $l\in\mathbb{N}$ (at the arbitrarily fixed time $t$), by which we mean the difference between the sum demand of all inelastic loads and the renewable energy that is produced on day $l$ (at the arbitrarily fixed time $t$). A quintessential feature of renewable energy sources is that the energy that they produce on any day $l$ is a random variable in the day-ahead market. Consequently, $z(l)$ is a random variable in the day-ahead market and we denote by $\theta_z$ its probability distribution. We also assume that $\{z(1),z(2),\hdots,\}$ is an Independent and Identically Distributed (IID) sequence.

Denote by $d_{i}(l)$ the \emph{baseline demand} of DR load $i$ on day $l$, $i\in\{1,\hdots,n\}$ and $l\in\mathbb{N}.$ By baseline demand $d_{i}(l)$, we mean the energy that DR load $i$ consumes on day $l$ if it does not adjust its energy consumption for demand response. To keep the analysis simple, we quantize power so that $d_{i}(l)\in\{0,1,\hdots,d_{\mathrm{max}}\}$ for all $i,l.$ 

\subsection{Types}
Real-world loads exhibit two important characteristics: (i) Their baselines could be different on different days, and (ii) their appetite for reducing energy consumption could be different on different days. We model these two factors by the notion of a \emph{type}. Hence, the baseline of a DR load and its appetite for reducing energy consumption being different on different days is tantamount to its type being different on different days. Section \ref{sec_costs} provides a precise operational meaning for the type of a DR load. We denote by $\Delta$ the type space of the DR loads. 
 
We model the time variation of a DR load $i$'s type by modeling it as a random variable that is drawn independently on each day $l$ from a probability distribution $\theta_i.$ Hence, denoting by $\delta_i(l)\in\Delta$ the type of DR load $i$ on day $l$, the sequence $\{\boldsymbol{\delta}(1),\boldsymbol{\delta}(2),\hdots\}$ is an IID sequence of random variables with $\boldsymbol{\delta}(1)\sim\theta_1\times\hdots\times\theta_n=:\theta_\otimes.$ 

Both the type distribution $\theta_i$ and the type realization $\{\delta_i(1),\delta_i(2),\hdots\}$ of DR load $i$ are private knowledge of the load. Moreover, these quantities are revealed to the load at different times. Specifically, in the day-ahead market, the load does not know precisely its baseline the next day, nor does it know precisely its appetite for reducing consumption the next day. Rather, it only knows the probability distribution $\theta_i$ according to which its type is drawn. The load observes its type realization $\delta_i(l)$ only on day $l$. 

Note that we have assumed the type distribution of each DR load to remain the same on all days. This is only for ease of exposition and to simplify to some extent the already complicated notation. With minor modifications, the results of this paper will continue to hold even in the case when the type distributions are different on different days but the number of different distributions in $L$ days is $o(L).$

\subsection{Cost functions}\label{sec_costs}

Providing demand response involves the DR loads reducing their energy consumption from their baseline to a lower quantity during times of power supply shortage. Power supply shortages could occur, for example, due to low renewable energy generation. Power consumption reductions come at a certain discomfort or productivity loss to the DR loads which we capture by means of a cost function. The cost incurred by a DR load on any given day is a function of both the consumption reduction and its type on that particular day. Specifically, we denote by $c_i(x,\delta_i(l))\in\mathbb{R}$ the cost incurred by DR load $i$ on day $l$ for consuming $x$ units of energy below its baseline (recall that the type $\delta_i(l)$ specifies the DR load's baseline on day $l$). 

Generators that have low ramp rates must have their power dispatch scheduled well in advance of the time of power delivery, and this is typically done in the day-ahead market. We denote by $c_g:\mathbb{R}_+\to\mathbb{R}$ the production function of the generator so that $c_g(x)$ specifies the cost that the generator incurs for producing $x$ units of energy. 

We assume the availability of a fast-acting reserve generator that can produce or consume energy in real time to balance real-time demand-supply mismatches. 
We denote by $c_r:\mathbb{R}\to\mathbb{R}$ the production function of the reserve so that $c_r(x)$ specifies the cost incurred by the reserve generator for producing $x$ units of energy.

\subsection{Demand response policy}\label{Sec_DRPolicy}


On any day $l$, the ISO observes the supply shortage $z(l)$ and each DR load $i$ observes its type ${\delta}_i(l)$. The type realizations are private to the DR loads but the ISO requires their knowledge to determine the optimal consumption reduction of each DR load on day $l$. Hence, the ISO requests the DR loads to report their types, but for strategic reasons that will become clear in Section \ref{sec_mechanismDesign}, they may not bid their types truthfully. Hence, we denote by $\widehat{\delta}_i(l)$ the type reported by DR load $i$ on day $l.$ A \emph{demand response policy} $\boldsymbol{\pi}:\mathbb{R}\times\Delta^n\to\mathbb{R}^n$ is a rule by which the ISO determines the amount by which each DR load should reduce its energy consumption from its baseline as a function of $z(l)$ and the reported types ${\widehat{\boldsymbol{\delta}}}(l)$. 

\subsection{The day-ahead market}\label{SecDAMarket}
There are two decisions that the ISO should compute in the day-ahead market: 
\begin{enumerate}
    \item[(i)] the power dispatch of the generator, and 
    \item[(ii)] the demand response policy of the DR loads. 
\end{enumerate}
These quantities must be computed only based on the information that is available in the day-ahead market, namely, the type distributions $\theta_1,\hdots,\theta_n$ of the DR loads, which are known privately to them, and the distribution $\theta_z$ of the net demand of the inelastic loads, which is public knowledge.

Since the type distributions are private to the DR loads, the ISO requests the DR loads to report them in the day-ahead market. However, for strategic reasons that we describe in Section \ref{sec_mechanismDesign}, the DR loads may not bid their type distributions truthfully and so we denote by $\widehat{\theta}_i$ the type distribution reported by DR load $i.$

Suppose for a moment that the DR loads bid their type distributions truthfully so that $\widehat{\theta}_{\otimes}\coloneqq\widehat{\theta}_1\times\hdots\times\widehat{\theta}_n=\theta_\otimes$, and that they bid their type realizations truthfully in real time so that $\boldsymbol{\widehat{\delta}}(l)=\boldsymbol{\delta}(l)$ for all $l$. How should the ISO compute the power dispatch of the generator and the demand response policy of the DR loads in this case?

To answer this, suppose that energy dispatch of the generator is chosen to be $g$ and that the demand response policy is chosen to be $\boldsymbol{\pi}$. Then, the demand-supply mismatch that occurs in real time on any day $l$ is $$z(l)-g+\sum_{i=1}^n\big[\widehat{d}_i(l)-\pi_i\big(z(l),\widehat{\boldsymbol{\delta}}(l)\big)\big]=:g_r(l)$$ where $\pi_i(z(l),\widehat{\boldsymbol{\delta}}(l))$ denotes the $i$th component of $\boldsymbol{\pi}(z(l),\widehat{\boldsymbol{\delta}}(l))$ and $\widehat{d}_i(l)$ denotes the baseline reported by DR load $i$ on day $l$ (which, recall, is specified by $\widehat{\delta}_i(l)$). The ISO must purchase $g_r(l)$ units of energy from the reserve at cost $c_r(g_r(l))$. Note that $g_r(l)$ is a random variable in the day-ahead market whose distribution depends on $\theta_\otimes,$ which we have assumed for the moment is known to the ISO. Consequently, the expected social cost --- defined as the total cost incurred by the generator, the reserve, and the DR loads --- on day $l$ is 
\begin{align}
    W(g&,\boldsymbol{\pi},\widehat{\theta}_\otimes)=\mathbb{E}_{(z(l),\widehat{\boldsymbol{\delta}}(l))\sim\theta_z\times\widehat{\theta}_\otimes}\bigg[c_g(g)+c_r(g_r(l))\nonumber\\
    &\;\;\;\;\;\;\;\;\;\;\;\;\;\;\;\;\;\;\;\;\;\;\;\;\;\;\;+\sum_{i=1}^nc_i\big(\pi_i(z(l),\widehat{\boldsymbol{\delta}}(l)),\widehat{\delta_i}(l)\big)\bigg].
\end{align}
The goal of the ISO in the day-ahead market is to minimize the total expected social cost the following day and therefore determines the power dispatch $g^*_{\widehat{\theta}_\otimes}$ and demand response policy $\boldsymbol{\pi}^*_{\widehat{\theta}_\otimes}$ as
\begin{align}
    (g^*_{\widehat{\theta}_\otimes},\boldsymbol{\pi}^*_{\widehat{\theta}_\otimes})=\underset{g,\boldsymbol{\pi}}{\mathrm{argmin}}\;\;W(g,\boldsymbol{\pi},\widehat{\theta}_\otimes).\label{DAdecision}
\end{align}
We denote by $W^*(\widehat{\theta}_\otimes)$ the optimal value of the above objective. 

The problem of course is that the reported type distributions $(\widehat{\theta}_1,\hdots,\widehat{\theta}_n)$ may not be equal to the true type distributions $(\theta_1,\hdots,\theta_n)$, and so the decision $(g^*_{\widehat{\theta}_\otimes},\pi^*_{\widehat{\theta}_{\otimes}})$ which the ISO computes in the day-ahead market may not be equal to the optimal decision $(g^*_{\theta_\otimes},\pi^*_{\theta_\otimes})$. We describe in Section \ref{mechanism} the mechanism by which the ISO can elicit the type distributions truthfully.

{It is worth noting that the optimal power dispatch and demand response policy are functions of only $\theta_\otimes$ and $\theta_z$ which we assume remain the same on all days. Hence, it suffices for the DR loads to report their type distributions just once and for the ISO to compute the optimal power dispatch and demand response policy just once, namely, in the day-ahead market before day $1$. It can reuse these decisions on all days without any loss of optimality.} As mentioned before, with minor modifications, the results of this paper extend to a more general case wherein the distributions are different on different days but the number of different distributions in $L$ days is $o(L).$



\subsection{The real-time market}\label{SecRTMarket}
After the day-ahead decisions are made, the net demand $z(l)$ of the inelastic loads and the type profile $\boldsymbol{\delta}(l)$ of the DR loads realize on day $l$. The ISO requests the DR loads to report their type realizations and we denote by $\widehat{\delta}_i(l)$ the type reported by DR load $i$ on day $l,$ which may or may not be equal to its true type $\delta_i(l).$ The ISO then computes $\boldsymbol{\pi}^*_{\widehat{\theta}_\otimes}(z(l),\boldsymbol{\widehat{\delta}}(l))$, informs each DR load the amount by which it should reduce its consumption, and purchases the residual mismatch from the spot market. Each DR load $i$ is contractually obligated to set its actual energy consumption $y_i(l)$ as $$y_i(l)=\widehat{d}_i(l)-\pi^{*}_{\widehat{\theta}_\otimes,i}(z(l),\widehat{\boldsymbol{\delta}}(l)),$$
where $\pi^{*}_{\widehat{\theta}_\otimes,i}(z(l),\widehat{\boldsymbol{\delta}}(l))$ denotes the $i$th component of $\boldsymbol{\pi}^*_{\widehat{\theta}_\otimes}(z(l),\boldsymbol{\widehat{\delta}}(l)).$ Note that the ISO can check for each DR load $i$ whether its actual energy consumption satisfies the above equality, and declare it to be non-compliant with the DR program if not.

Note also that the ISO conducting the above check does not imply that a DR load $i$ truthfully reduces its energy consumption by $\pi_{\widehat{\theta}_\otimes,i}^*(z(l),\boldsymbol{\widehat{\delta}}(l))$ since it could misreport its baseline. For example, it could inflate its baseline by reporting $\widehat{d}_i(l)>d_i(l)$ to give the impression of curtailing consumption without actually doing so. The mechanism presented in Section \ref{mechanism} incentivizes the DR loads to truthfully report their baselines, thereby obligating them to reduce their real-time consumption by the ISO-specified amounts.

\subsection{Bidding strategies of DR loads}
As described in Section \ref{SecDAMarket}, the DR loads only know their type distributions in the day-ahead market which they have to report to the ISO. We denote by $\sigma_i:\Theta\to\Theta$ the day-ahead bidding strategy of DR load $i$ so that $\widehat{\theta}_i=\sigma_i(\theta_i)$. Here, $\Theta$ denotes the set of probability distributions on $\Delta$. 

In real time, DR loads bid their type realizations. We allow for the type bid $\widehat{\delta}_i(l)$ of DR load $i$ to be constructed based on all information that is available to it until day $l$, and in accordance with any arbitrary, randomized, history-dependent policy. 
Hence, a real-time bidding policy $\phi$ of DR load $i$ specifies $\mathbb{P}_{\phi}(\widehat{\delta}_i(l)\vert\delta_i^l,{(\pi^*_{\widehat{\theta}_\otimes,i})}^{l-1},g^*_{\widehat{\theta}_\otimes})$ for each $l\in\mathbb{N}$. I.e., it specifies a probability distribution over the type space $\Delta$ according to which $\widehat{\delta}_i(l)$ is chosen as a function of all observations available to DR load $i$ until day $l$. We denote by $\Phi_i$ the set of all real-time bidding policies.

Observe that the real-time bidding policy is a rule which specifies how a DR load should construct its type bid on any given day. While the output of the rule on any given day is a random variable which depends on the realization of the type sequence $\boldsymbol{\delta}^\infty,$ \emph{there is nothing random about the rule itself}. A load without any loss of generality can choose the rule in the day-ahead market corresponding to day $1$, and as a function of $\theta_i$ --- the only information that is available to it at that time. This observation leads to the notion of a \emph{real-time bidding strategy}. A real-time bidding strategy of DR load $i$ is a function that maps its type distribution $\theta_i$ to a real-time bidding policy in $\Phi_i.$ We denote by $\mu_i:\Theta\to\Phi_i$ the real-time bidding strategy of DR load $i$ so that $\mu_i(\theta_i)$ is its real-time bidding policy. 

Note that both the day-ahead bidding strategy $\sigma_i$ and the real-time bidding strategy $\mu_i$ are functions on the set $\Theta$, and a DR load $i$ without any loss of generality can choose these functions ``offline," i.e., even before it observes $\theta_i$.

We refer to the combination $S_i\coloneqq(\sigma_i,\mu_i)$ as the \emph{strategy} of DR load $i$ and denote by $\mathcal{S}_i$ the set of strategies available to DR load $i.$ Note that once all DR loads fix their strategies, a functional relationship is established between $\boldsymbol{\widehat{\delta}}^\infty$ and $\boldsymbol{\delta}^\infty,$ and all random variables become well defined. 

\begin{definition}
A strategy $(\sigma,\mu)$ of DR load $i,$ $i\in\{1,\hdots,n\},$ is \emph{truthful} if 
\begin{enumerate}
    \item $\sigma(\theta)=\theta$ for all $\theta\in\Theta$, and
    \item there exists $\mathcal{L}\subset\mathbb{N}$ with $\sum_{k=1}^L{\bf 1}_{\{k\in\mathcal{L}\}}=o(L)$ such that for all $l\notin\mathcal{L},$ $$\mathbb{P}_{\mu}(\widehat{\delta}_i(l)\big\vert\delta_i^l,({\pi_{\widehat{\theta}_\otimes,i}^*})^{l-1},g^*_{\widehat{\theta}_\otimes})={\bf 1}_{\{\widehat{\delta}_i(l)=\delta_i(l)\}}.$$
\end{enumerate}
In words, a truthful strategy reports the type distribution truthfully and reports the type realization truthfully ``almost all days."
\end{definition}
We denote by $\mathcal{T}_i$ the set of truthful strategies available to DR load $i.$

\subsection{Payments and Utilities}\label{Sec_utilities}
The ISO has for every $(i,l)\in\{1,\hdots,n\}\times\mathbb{N}$ a payment rule $p_{i,l}:\theta_\otimes\times\Delta^{n\times l}\to\mathbb{R}$ that determines the payment that DR load $i$ receives on day $l$ for setting its consumption on that day to be equal to $\widehat{d}_i(l)-\pi^*_{\widehat{\theta}_\otimes,i}(z(l),\boldsymbol{\widehat{\delta}}(l)).$ The payment is determined as a function of all information that is available to the ISO until day $l$, namely, $\widehat{\theta}_\otimes$ and $\widehat{\boldsymbol{\delta}}^l$. 

The utility that DR load $i$ accrues on day $l$ is defined as $$u_{i,l}(S_i,\mathbf{S}_{-i},\theta_\otimes,\boldsymbol{\delta}^\infty)=p_{i,l}-c_i\big(\pi_{\widehat{\theta}_\otimes,i}^*(z(l),\boldsymbol{\widehat{\delta}}(l)),\delta_i(l)\big)$$ and its long-term average utility is defined as 
\begin{align}
    u_i^\infty(S_i,\mathbf{S}_{-i},\theta_\otimes,\boldsymbol{\delta}^\infty)\coloneqq\liminf_{L\to\infty}\frac{1}{L}\sum_{l=1}^Lu_{i,l}(S_i,\mathbf{S}_{-i},\theta_\otimes,\boldsymbol{\delta}^\infty).\label{longtermUtility}
\end{align}
Note that the above utility is a function not only of the strategy $S_i$ that DR load $i$ employs, but also of the strategies $\mathbf{S}_{-i}$ that the other DR loads employ. We will return to this point in the next subsection. Before that, it is necessary to recall the notion of a \emph{non-bankrupting strategy} that was recently introduced in \cite{arxivAER}.

\begin{definition}
A strategy $S_i$ of DR load $i,$ $i\in\{1,\hdots,n\},$ is \emph{non-bankrupting} if for all $(\mathbf{S}_{-i},\theta_\otimes),$
\begin{align}
    u_i^\infty(S_i,\mathbf{S}_{-i},\theta_\otimes,\boldsymbol{\delta}^\infty)>-\infty
\end{align}
almost surely. 

A strategy profile $\mathbf{S}=(S_1,\hdots,S_n)$ is non-bankrupting if $S_i$ is non-bankrupting for all $i\in\{1,\hdots,n\}.$
\end{definition}
{To elaborate, note that when the cost function of a DR load is bounded, the only way for its long-term average utility to be $-\infty$ is for its long-term average payment to be $-\infty,$ i.e., it pays an infinite sum to the ISO on average. Hence, the above definition essentially states that a DR load's strategy is non-bankrupting if it is guaranteed to not expend an infinite amount by employing that strategy, regardless of what strategies the other DR loads employ and what their type distributions are. }

\subsection{The Mechanism Design Problem}\label{sec_mechanismDesign}

There are three problems that the ISO faces in operating the grid optimally. The first is that the optimal day-ahead decision $(g^*_{\theta_\otimes},\boldsymbol{\pi}^*_{\theta_\otimes})$ is a function of the type distributions $\theta_1,\hdots,\theta_n$ which may not be reported truthfully. Secondly, even if the ISO were to somehow compute the optimal day-ahead decisions, the optimal real-time curtailment $\boldsymbol{\pi}^*_{{\theta}_\otimes}(z(l),\boldsymbol{\delta}(l))$ on any day $l$ is a function of the type realizations $\delta_1(l),\hdots,\delta_n(l)$ which may not be reported truthfully. Finally, the ISO cannot verify if a DR load $i$ faithfully reduces its consumption by $\pi_{\widehat{\theta}_\otimes,i}^*(z(l),\widehat{\boldsymbol{\delta}}(l))$ since it can only observe its actual consumption $y_i(l)$ and not its counterfactual baseline $d_i(l).$ 

All of these problems disappear if each DR load $i$ employs a truthful strategy. However, note from (\ref{longtermUtility}) that the utility of a DR load $i$ is a function not only of the strategy $S_i$ that it employs, but also of the strategies $\mathbf{S}_{-i}$ that the other DR loads employ. Consequently, a DR load $i$ may not employ a truthful strategy if there exists $(\mathbf{S}_{-i},\theta_\otimes)$ such that with some non-zero probability, DR load $i$ accrues a larger long-term average utility by employing a non-truthful strategy. This brings us to the mechanism design problem. We wish to design the payment rule $\{p_{i,l}:(i,l)\in\{1,\hdots,n\}\times\mathbb{N}\}$ such that each DR load $i$'s utility (\ref{longtermUtility}) is almost surely maximized by choosing $S_i\in\mathcal{T}_i$ regardless of what non-bankrupting strategy profile $\boldsymbol{S}_{-i}$ the other DR loads employ, and regardless of what ${\theta}_\otimes$ is. The next section presents such a payment rule.




\section{An Efficient and Incentive-Compatible Mechanism for Demand Response Markets}\label{mechanism}

For any $(i,l)\in\{1,\hdots,n\}\times\mathbb{N},$ the payment function $p_{i,l}$ consists of two components: (i) a first-stage payment $p_{i,l}^1$ that is determined based on only the type distributions reported in the day-ahead market and (ii) a second-stage settlement $p_{i,l}^2$ that is determined at the end of day $l$ based on the history of type realizations reported until day $l.$ We describe these payment rules next. They are an adaptation of the payment rule developed in \cite{arxivAER} for two-stage repeated stochastic games. 

\subsection{First-stage payment}
For every DR load $i,$ $i\in\{1,\hdots,n\},$ the first stage payment that it receives on any day $l$ is the Vickrey-Clarke-Groves (VCG) payment defined as 
\begin{align*}
    p_{i,l}^1(\widehat{\theta}_{\otimes})\coloneqq &W^*(\widehat{\theta}_{\otimes,-i})\nonumber\\
    -&\bigg[W^*(\widehat{\theta}_\otimes)-\mathbb{E}_{(z,\boldsymbol{\widehat{\delta}})\sim\theta_z\times\widehat{\theta}_{\otimes}}\big[c_i(\pi^*_{\widehat{\theta}_\otimes,i}(z,\widehat{\boldsymbol{\delta}}),\widehat{\delta}_i)\big]\bigg]
\end{align*}
where $W^*(\widehat{\theta}_{\otimes,-i})$ denotes the optimal social cost that would be attained if DR load $i$ were absent. 

\subsection{Second-stage settlement}
One of the primary functions of the second-stage settlement is to penalize DR loads for type bids whose empirical distributions are not consistent with the type distribution that they bid in the day-ahead market. Towards this, define for each $i\in\{1,\hdots,n\},$ $\nu\in\Delta$ and $L\in\mathbb{N}$, the empirical deviation
\begin{align}
    {f}_{i,\nu}(L)\coloneqq\bigg[\frac{1}{L}\sum_{l=1}^L{\bf 1}_{\{\widehat{\delta}_i(l)=\nu\}}\bigg]-\widehat{\theta}_i(\nu)
\end{align}
where $\widehat{\theta}_i(\nu)$ denotes the probability that a random variable distributed according to $\widehat{\theta}_i$ takes the value $\nu$.
It is easy to see that if DR load $i$ employs a truthful strategy, then $f_{i,\nu}(L)\to0$ almost surely for all $\nu\in\Delta$. 

Similarly, for every $i\in\{1,\hdots,n\},$ $\nu\in\Delta,$ $\boldsymbol{\eta}\in\Delta^{n-1},$ and $L\in\mathbb{N},$ define
\begin{align}
    h_{i,\nu,\boldsymbol{\eta}}(L)\coloneqq\bigg[\frac{1}{L}\sum_{l=1}^L&{\bf 1}_{\{\widehat{\delta}_i(l)=\nu,\widehat{\boldsymbol{\delta}}_{-i}(l)=\boldsymbol{\eta}\}}\bigg]\nonumber\\
    &-\bigg[\widehat{\theta}_i(\nu)\bigg]\bigg[\frac{1}{L}\sum_{l=1}^L{\bf 1}_{\{\widehat{\boldsymbol{\delta}}_{-i}(l)=\boldsymbol{\eta}\}}\bigg]
\end{align}
and note that $h_{i,\nu,\boldsymbol{\eta}}(L)\to0$ almost surely if the DR loads employ a truthful strategy. 

The second stage settlement rule checks on each day $L$ if $f_{i,\nu}(L)$ or $h_{i,\nu,\boldsymbol{\eta}}(L)$ exceeds a certain threshold $r(L)$ for some $(\nu,\boldsymbol{\eta})$ and imposes penalty $J_p(L)$ if one of them does. Towards this, define 
\begin{align*}
    E_{i}(l)\coloneqq\{\sup_{\nu}\vert f_{i,\nu}(l)\vert \geq r(l) \cup \sup_{(\nu,\boldsymbol{\eta})}\vert h_{i,\nu,\boldsymbol{\eta}}(l)\vert\geq r(l)\}.\label{EiDefn}
\end{align*}

How should the threshold sequence $\{r\}$ and the penalty sequence $\{J_p\}$ be chosen? The sequence $\{r\}$ should be chosen to balance two competing objectives. On the one hand, $r(l)$ must tend to $0$ as $l\to\infty$ since otherwise, the set of type bids that fall within the threshold will be ``large," thereby violating incentive compatibility. However, if the sequence shrinks to $0$ too quickly, then even truthful type bids may fall outside the threshold often, thereby incurring penalties often which in turn violates individual rationality. To balance the objectives, $\{r\}$ must be chosen such that $$\lim_{l\to\infty}r(l)=0,$$ and for some $\gamma>0,$ $$r(l)\geq\sqrt{\frac{\ln{2l^{1+\gamma}}}{2l}}.$$ The penalty sequence must be chosen to satisfy $$\lim_{l\to\infty}\frac{J_p(l)}{l}=\infty.$$
{See \cite[Section III]{arxivAER} for further intuition for these conditions. }

The second stage settlement that DR load $i$, $i\in\{1,\hdots,n\},$ receives on any day $l$, $l\in\mathbb{N},$ is defined as 
\begin{align}
    p_{i,l}^2(\widehat{\theta}_{\otimes},\widehat{\boldsymbol{\delta}}^l)\coloneqq\bigg[&c_i(\pi^*_{\widehat{\theta}_\otimes,i}(z(l),\widehat{\boldsymbol{\delta}}(l)),\widehat{\delta}_i(l))\nonumber\\
    &-\mathbb{E}_{(z,\boldsymbol{\widehat{\delta}})\sim\theta_z\times\widehat{\theta}_{\otimes}}[c_i(\pi^*_{\widehat{\theta}_\otimes,i}(z,\widehat{\boldsymbol{\delta}}),\widehat{\delta}_i)]\bigg]\nonumber\\
    &\;\;\;\;\;\;\;\;\;\;\;\;\;\;\;\;\;\;\;\;\;\;\;\;\;\;\;\;\;\;\;\;\;\;-J_p(l){\bf 1}_{E_i(l)}.
\end{align}
The total payment received by DR load $i$ on day $l$ is the sum of the first-stage payment and the second-stage settlement. I.e., 
\begin{align}
    p_{i,l}(\widehat{\theta}_{\otimes},\widehat{\boldsymbol{\delta}}^l)=p_{i,l}^1(\widehat{\theta}_{\otimes})+p_{i,l}^2(\widehat{\theta}_{\otimes},\widehat{\boldsymbol{\delta}}^l).\label{payment}
\end{align}
We now state the main result of the paper.

\begin{theorem}
Consider the mechanism defined by the decision rule (\ref{DAdecision}) and the payment rule (\ref{payment}). 
\begin{enumerate}
    \item For every $i\in\{1,\hdots,n\},$ $S_i\in\mathcal{S}_i$, $T_i\in\mathcal{T}_i,$ non-bankrupting strategy profile $\mathbf{S}_{-i},$ and $\theta_\otimes,$
    \begin{align}
        u_i^\infty(T_i,\mathbf{S}_{-i},\theta_\otimes,\boldsymbol{\delta}^\infty)\geq u_i^\infty(S_i,\mathbf{S}_{-i},\theta_\otimes,\boldsymbol{\delta}^\infty)
    \end{align}
    almost surely. 
    
    I.e., every DR load $i$ accrues a larger utility by employing a truthful strategy than by employing any other strategy, regardless of what (non-bankrupting) strategies the other DR loads employ.
    
    \item Suppose that $W^*({\theta}_{\otimes,-i})-W^*({\theta}_{\otimes})\geq0$ for all $\theta_\otimes.$ For every $i\in\{1,\hdots,n\},$ $T_i\in\mathcal{T}_i,$ $\mathbf{S}_{-i}\in\mathcal{S}_{-i},$ and $\theta_\otimes,$
    \begin{align}
        u_i^\infty(T_i,\mathbf{S}_{-i},\theta_\otimes,\boldsymbol{\delta}^\infty)\geq 0
    \end{align}
    almost surely.
    
    I.e., every DR load $i$ accrues a nonzero utility by employing a truthful strategy regardless of the strategies that the other DR loads employ.
    
    \item If for all $i\in\{1,\hdots,n\},$ $S_i\in\mathcal{T}_i,$ then, 
    \begin{align}
        \lim_{L\to\infty}\frac{1}{L}\sum_{l=1}^L\bigg[&c_g(g^*_{\widehat{\theta}_\otimes})+c_r(g_r(l))\nonumber\\
        +\sum_{j=1}^nc_j(&\pi^*_{\widehat{\theta}_\otimes,j}(z(l),\boldsymbol{\widehat{\delta}}(l)),\delta_j(l))\bigg]=W^*(\theta_\otimes)
    \end{align}
    almost surely.
    
    I.e., if all DR loads employ a truthful strategy, then the long-term average social cost that is incurred is almost surely equal to its optimal value.
\end{enumerate}
\end{theorem}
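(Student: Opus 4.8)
The plan is to first collapse the payment into a transparent form and record two facts about the penalty, and then dispatch the three claims in the order (3), (2), (1), each reusing the earlier machinery. Writing $\pi_i^*(l):=\pi^*_{\widehat\theta_\otimes,i}(z(l),\widehat{\boldsymbol\delta}(l))$ and substituting $p_{i,l}=p_{i,l}^1+p_{i,l}^2$ into $u_{i,l}$, the two copies of $\mathbb{E}_{\theta_z\times\widehat\theta_\otimes}[c_i(\pi^*_{\widehat\theta_\otimes,i},\widehat\delta_i)]$ cancel, leaving
\begin{equation*}
u_{i,l}=\big[W^*(\widehat\theta_{\otimes,-i})-W^*(\widehat\theta_\otimes)\big]+c_i(\pi_i^*(l),\widehat\delta_i(l))-c_i(\pi_i^*(l),\delta_i(l))-J_p(l){\bf 1}_{E_i(l)}.
\end{equation*}
The bracketed VCG term is a constant fixed by the day-ahead reports, the two cost terms cancel on every day with $\widehat\delta_i(l)=\delta_i(l)$, and only the penalty can drive the long-run average to $-\infty$.

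The mechanics rest on two facts about $E_i(l)$, both adapted from \cite{arxivAER}. First, if load $i$ is truthful then $\widehat\delta_i(l)=\delta_i(l)$ are i.i.d.\ draws from $\widehat\theta_i=\theta_i$ that are, conditional on the past, independent of $\widehat{\boldsymbol\delta}_{-i}(l)$; hence $f_{i,\nu}(l)$ and $h_{i,\nu,\boldsymbol\eta}(l)$ are bounded, mean-zero martingale averages, and the standing bound $r(l)\ge\sqrt{\ln(2l^{1+\gamma})/(2l)}$ makes the Hoeffding/Azuma tail summable, so Borel--Cantelli yields ${\bf 1}_{E_i(l)}=0$ for all large $l$ almost surely, \emph{regardless of the other loads}. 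Thus a truthful load is penalized only finitely often and is non-bankrupting. Second, if $S_i$ is non-bankrupting then $E_i(l)$ occurs only finitely often almost surely: were it triggered on an infinite set $l_1<l_2<\cdots$, then along $L=l_k$ the penalty alone contributes $-J_p(l_k)/l_k\to-\infty$ (the other terms being bounded), forcing $u_i^\infty=-\infty$. In both cases $\sup_\nu|f_{i,\nu}(l)|$ and $\sup_{\nu,\boldsymbol\eta}|h_{i,\nu,\boldsymbol\eta}(l)|$ fall below $r(l)\to0$ eventually; the vanishing of the $f$'s and $h$'s for every load forces the empirical law of $\widehat{\boldsymbol\delta}(l)$ to factorize into $\widehat\theta_\otimes$, and together with the exogenous i.i.d.\ $z(l)\sim\theta_z$ the empirical law of $(z(l),\widehat{\boldsymbol\delta}(l))$ converges to $\theta_z\times\widehat\theta_\otimes$.

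Claim (3) is then immediate: with all loads truthful, $\widehat\theta_\otimes=\theta_\otimes$ so the implemented decision is the optimal $(g^*_{\theta_\otimes},\boldsymbol\pi^*_{\theta_\otimes})$, and on the density-one set of days with $\widehat{\boldsymbol\delta}(l)=\boldsymbol\delta(l)$ the realized social cost is a fixed function $\psi(z(l),\boldsymbol\delta(l))$ with $\mathbb{E}_{\theta_z\times\theta_\otimes}[\psi]=W^*(\theta_\otimes)$; the strong law of large numbers gives the average, while the $o(L)$ exceptional days are negligible under the paper's boundedness of the costs. For claim (2), a truthful load's penalty vanishes almost surely by the first fact \emph{irrespective of $\mathbf S_{-i}$} and its two cost terms cancel on almost every day, so its long-run average utility equals the deterministic constant $W^*(\widehat\theta_{\otimes,-i})-W^*(\theta_i\times\widehat\theta_{\otimes,-i})$, which is $\ge0$ by the hypothesis applied at $\theta_\otimes=\theta_i\times\widehat\theta_{\otimes,-i}$. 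Because this never uses consistency of the other loads, it holds for arbitrary $\mathbf S_{-i}$.

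Claim (1) is the substantive one. If $S_i$ is bankrupting the inequality is trivial, so take $S_i$ non-bankrupting. The truthful utility equals $W^*(\widehat\theta_{\otimes,-i})-W^*(\theta_i\times\widehat\theta_{\otimes,-i})$ as above, while evaluating the averages through paragraph two gives
\begin{equation*}
u_i^\infty(S_i,\mathbf S_{-i},\theta_\otimes,\boldsymbol\delta^\infty)=W^*(\widehat\theta_{\otimes,-i})-\Big[A(S_i)+\limsup_{L\to\infty}\tfrac1L\textstyle\sum_{l=1}^L c_i(\pi_i^*(l),\delta_i(l))\Big],
\end{equation*}
where $A(S_i):=\mathbb{E}_{\theta_z\times\widehat\theta_\otimes}\big[c_g(g^*_{\widehat\theta_\otimes})+c_r(g_r)+\sum_{j\ne i}c_j(\pi^*_{\widehat\theta_\otimes,j},\widehat\delta_j)\big]$ is the converged rest-of-system cost. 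Since $\limsup\ge\liminf$, claim (1) reduces to the single inequality
\begin{equation*}
A(S_i)+\liminf_{L\to\infty}\tfrac1L\textstyle\sum_{l=1}^L c_i(\pi_i^*(l),\delta_i(l))\ \ge\ W^*(\theta_i\times\widehat\theta_{\otimes,-i}).
\end{equation*}
I would prove this by exhibiting a feasible day-ahead pair $(g^*_{\widehat\theta_\otimes},\widetilde{\boldsymbol\pi})$ for the problem defining $W^*(\theta_i\times\widehat\theta_{\otimes,-i})$ whose expected social cost under $\theta_z\times\theta_i\times\widehat\theta_{\otimes,-i}$ equals the left-hand side: $\widetilde{\boldsymbol\pi}$ passes the true type $\delta_i$ through the history-induced conditional reporting law of $\mu_i$ to form a surrogate report, applies the ISO map $\boldsymbol\pi^*_{\widehat\theta_\otimes}$, and charges load $i$ its true cost $c_i(\cdot,\delta_i)$; optimality of $W^*$ over feasible policies (randomized ones being dominated since $W$ is an expectation) then closes the argument. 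The \emph{main obstacle} is making this surrogate precise: because $\mu_i$ is randomized and history-dependent one must show the empirical conditional frequencies of $\widehat\delta_i$ given $\delta_i$ converge and that the reported-versus-true baseline bookkeeping inside $g_r$ is consistent --- exactly the technical content imported from \cite{arxivAER}.
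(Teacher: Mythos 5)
The first thing to note is that the paper does not actually prove this theorem: its ``proof'' is a one-sentence deferral to \cite[Theorem 1]{arxivAER}. So your proposal cannot be compared step-by-step against the paper's argument --- what you have done is attempt a from-scratch reconstruction of the proof that the citation hides. Most of your scaffolding is sound and is surely the right architecture: the collapse of $p_{i,l}=p^1_{i,l}+p^2_{i,l}$ is algebraically correct (the two copies of $\mathbb{E}_{\theta_z\times\widehat{\theta}_\otimes}[c_i]$ do cancel, leaving the constant VCG term, the misreport cost gap, and the penalty); your first fact follows from Hoeffding/Azuma--Hoeffding since the lower bound $r(l)\geq\sqrt{\ln(2l^{1+\gamma})/(2l)}$ makes the tail probabilities summable, so Borel--Cantelli applies; your second fact is a correct use of the definition of non-bankrupting strategies together with $J_p(l)/l\to\infty$; and claims (3) and (2) do follow from these facts essentially as you say, including the correct instantiation of the hypothesis of claim (2) at $\theta_\otimes=\theta_i\times\widehat{\theta}_{\otimes,-i}$.

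There are, however, two genuine gaps. The substantive one is the one you flag yourself: claim (1) is correctly reduced to the inequality $A(S_i)+\liminf_{L\to\infty}\frac{1}{L}\sum_{l=1}^{L}c_i(\pi^*_i(l),\delta_i(l))\geq W^*(\theta_i\times\widehat{\theta}_{\otimes,-i})$, and the surrogate-policy idea is the right one, but it is left as a plan rather than a proof, and this is precisely the step that carries the whole incentive-compatibility argument. To close it you would need to: (i) pass to a subsequence along which the joint empirical law of $(z(l),\delta_i(l),\widehat{\boldsymbol{\delta}}(l))$ converges; (ii) use your second fact (applied to \emph{all} loads, which is where non-bankruptcy of $\mathbf{S}_{-i}$ enters) to pin the limiting $(z,\widehat{\boldsymbol{\delta}})$-marginal to $\theta_z\times\widehat{\theta}_\otimes$, and use independence of $\delta_i(l)$ from the past and from $\boldsymbol{\delta}_{-i}(l)$ to pin the limiting $(z,\delta_i,\widehat{\boldsymbol{\delta}}_{-i})$-marginal to $\theta_z\times\theta_i\times\widehat{\theta}_{\otimes,-i}$; (iii) let the limiting conditional law of $\widehat{\delta}_i$ given $(z,\delta_i,\widehat{\boldsymbol{\delta}}_{-i})$ define a randomized DR policy and invoke linearity of $W$ in the policy kernel; and (iv) reconcile the baseline bookkeeping, since $g_r$ is computed from \emph{reported} baselines while the utility charges $c_i(\pi^*_i,\delta_i)$ at the mandated curtailment --- the surrogate policy's feasibility and cost accounting must be made consistent with both. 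The second, smaller gap: the paper's $\mathcal{T}_i$ permits misreports on an $o(L)$ set of days, but your first fact is proved only for exactly-truthful play. An $o(L)$ set of misreport days shifts the empirical frequencies by $o(1)$, which need \emph{not} fall below $r(L)\to 0$ (e.g., misreport density $1/\ln L$ dominates $r(L)\approx\sqrt{\ln L/L}$), so such a ``truthful'' strategy could be penalized infinitely often and claims (1)--(2) would fail for it. This is arguably a defect in the theorem statement inherited from \cite{arxivAER}, but a self-contained proof must confront it, either by restricting to exactly truthful strategies or by coupling the sparsity of the exceptional set $\mathcal{L}$ to the threshold sequence $r$.
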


\begin{proof}
The result follows relatively straightforwardly from \cite[Theorem 1]{arxivAER} and is omitted in the interest of space. 
\end{proof}

\section{Numerical Results}\label{numerical}

In this section, we present a comparison of the proposed mechanism and the ``posted price mechanism" --- a popular alternative that has been employed in certain large-scale demand response trials in Europe. We first describe the posted price mechanism and then present simulation results which quantify the difference in the social cost between the posted price mechanism and the proposed mechanism. 

\subsection{The Posted Price Mechanism}

The posted price mechanism involves the ISO announcing a rebate for consumption reductions during times of power supply shortage. The DR loads react to the announced price and optimize their energy consumption which leads to energy consumption reductions. 


To elaborate, denote by $p$ the rebate that the ISO provides a DR load for every unit of reduction in its energy consumption. On each day $l,$ each DR load $i$ determines its consumption reduction $x^*_i(l)$ as
\begin{align}
    x^*_i(l)=\argmax_{x} \big[px-c_i(x,\delta_i(l))\big].
\end{align}
Consequently, the total demand-supply mismatch on day $l$ is $$z(l)-\sum_{i=1}^nx_i^*(l)=:g_r(l)$$ which the ISO must purchase from the spot market on day $l.$ This costs $c_r(g_r(l))$ on day $l.$ Consequently, the social cost incurred on day $l$ equals $$c_r\big(z(l)-\sum_{i=1}^nx_i^*(l)\big)+\sum_{i=1}^n c_i(x_i^*(l),\delta_i(l)).$$ 
Note that the above quantity is a function of the price $p$ offered by the ISO, and in Fig. \ref{simulationPlot}, the dotted curve plots the average social cost incurred by the posted price mechanism as a function of the price $p,$ averaged over $1000$ days. Section \ref{subsec_details} contains further details.

\subsection{The optimal mechanism}

The mechanism proposed in the Section \ref{mechanism} incentivizes each DR load to bid its type distribution and type realizations truthfully, thereby allowing the system operator to determine the optimal energy consumption reductions and reserve generation on each day $l$ as a solution to optimization program
\begin{align}
    \min_{x_1,\hdots,x_n} c_r(z(l)-\sum_{i=1}^n x_i)+\sum_{i=1}^n c_i(x_i,\delta_i(l)).
\end{align}
The optimal value of this program is a random variable that depends on the realization of $\boldsymbol{\delta}(l).$ The solid curve of Fig. \ref{simulationPlot} plots the optimal value averaged over a duration of $1000$ days. Note that the above quantity is independent of the price $p$ defined as a part of the posted price mechanism. The variation of the solid curve across different values of $p$ stems solely from averaging a random variable over a finite number of realizations. 

\subsection{Simulation parameters}\label{subsec_details}

For simulations, we assume a system with $10000$ demand response-providing loads, i.e., $n=10000$. We take all costs to be quadratic so that $c_i(x,\delta_i(l))=\frac{\delta_i(l)}{2}x^2$ for all $i,l$ and $c_r(x)=5x^2$. The distribution of $\delta_i(l)$ is taken to be uniform in $\{1,\hdots,10\}$ for all $i,l$, and the net demand of the inelastic loads $z(l)$ on each day $l$ is taken to be uniformly distributed in the interval [$0$J, $100$J]. We simulate both the posted price mechanism and the proposed mechanism for a duration of $1000$ days. 

An alert reader would have noticed that in the description of the posted price mechanism, we have implicitly assumed the ISO to know the baseline of each DR load --- an advantage not assumed for the ISO in the proposed mechanism. In spite of this advantage, the social cost incurred by the posted price mechanism is almost thrice as large as the social cost incurred by the proposed mechanism.

\begin{figure}
    \centering
    \includegraphics[width=\columnwidth]{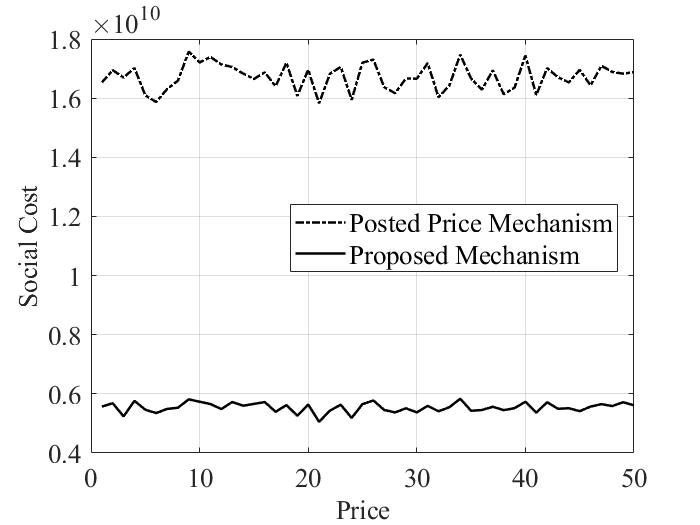}
    \vspace*{-7mm}
    \caption{Social cost incurred by the proposed mechanism and the posted price mechanism as a function of the rebate offered by the ISO.}
    \label{simulationPlot}
\end{figure}

\section{Conclusion}\label{conclusion}
A key difficulty in demand response schemes is to determine if a load indeed reduces its power consumption when called upon for demand response, and if so by how much. This is challenging because the system operator can only observe the actual consumption of a load and not its counterfactual baseline. Additionally, determining the optimal curtailment levels of the loads is also challenging as they depend on the loads’ privately-known costs which the loads may misreport. We have designed a two-stage mechanism to address these issues. The mechanism guarantees ex post incentive compatibility of truthful bidding in both the day-ahead market and in real time, ex post individual rationality, and attains the optimal social cost. 

\bibliographystyle{IEEEtran}
\bibliography{references.bib}

\appendix
\section*{List of Symbols}
\begin{tabular}{ll}
    $z(l)$ & Net demand of inelastic loads on day $l$\\
    $\theta_z$ & Probability distribution of $z(1)$\\
    $d_i(l)$ & Baseline of DR load $i$ on day $l$\\
    $\widehat{d}_i(l)$ & Reported baseline of DR load $i$ on day $l$\\
    $c_g$ & Production function of generator\\
    $c_r$ & Production function of reserve\\
    $c_i$ & Cost function of DR load $i$\\
    $\delta_i(l)$ & Type of DR load $i$ on day $l$\\
    $\widehat{\delta}_i(l)$ & Reported type of DR load $i$ on day $l$\\
    $\Delta$ & Type space of DR loads\\
    $\theta_i$ & Type distribution of DR load $i$\\
    $\widehat{\theta}_i$ & Reported type distribution of DR load $i$\\
    $\Theta$ & Set of type distributions of a DR load\\
    $\theta_\otimes$ & Joint distribution of DR loads' types\\
    $\widehat{\theta}_\otimes$ & Reported joint dist. of DR loads' types\\
    $\boldsymbol{\pi}$ & Demand response policy\\
    ${\pi}_i$ & DR policy for the $i$th DR load\\
    $\boldsymbol{\pi}^*_{{\theta_\otimes}}$ & Optimal DR policy for type distribution $\theta_\otimes$\\
    ${\pi}^*_{{\theta_\otimes},i}$ & Optimal DR policy for $i$th load for dist. $\theta_\otimes$\\
    $\boldsymbol{\pi}^*_{{\widehat{\theta}_\otimes}}$ & Optimal DR policy for type distribution $\widehat{\theta}_\otimes$\\
    ${\pi}^*_{{\widehat{\theta}_\otimes},i}$ & Optimal DR policy for $i$th load for dist. $\widehat{\theta}_\otimes$\\
    $g$ & Energy dispatch of generator\\
    $g^*_{{{\theta}_\otimes}}$ & Optimal energy dispatch for distribution ${\theta}_\otimes$\\
    $g^*_{{\widehat{\theta}_\otimes}}$ & Optimal energy dispatch for distribution $\widehat{\theta}_\otimes$\\
    $g_r(l)$ & Energy production of reserve on day $l$\\
    $W$ & Expected social cost function\\
    $W^*$ & Optimal expected social cost function\\
    $y_i(l)$ & Energy consumption of DR load $i$ on day $l$\\
    $\sigma_i$ & Day-ahead bidding strategy of DR load $i$\\
    $\phi_i$ & Real-time bidding policy of DR load $i$\\
    $\Phi_i$ & Set of real-time bidding policies of DR load $i$\\
    $\mu_i$ & Real-time bidding strategy of DR load $i$\\
    $S_i$ & Strategy of DR load $i$\\
    $\mathcal{S}_i$ & Set of strategies available to DR load $i$\\
    $\mathcal{T}_i$ & Set of truthful strategies of DR load $i$\\
    ${p}^1_{i,l}$ & First-stage payment of DR load $i$ on day $l$\\
    ${p}^2_{i,l}$ & Second-stage payment of DR load $i$ on day $l$\\
    ${p}_{i,l}$ & Total payment of DR load $i$ on day $l$\\
    ${u}_{i,l}$ & Utility accrued by DR load $i$ on day $l$\\
    ${u}^\infty_i$ & Long-term average utility accrued by load $i$\\
    ${r}(l)$ & Threshold on day $l$\\
    ${J_p}(l)$ & Penalty on day $l$\\
\end{tabular}

\end{document}